\newcommand\mytoday{\number\year-\ifcase\month\or 01\or 02\or 03\or 04\or 05\or 06\or 07\or 08\or 09\or 10\or 11\or 12\fi-\ifcase\day\or 01\or 02\or 03\or 04\or 05\or 06\or 07\or 08\or 09\or 10\or 11\or 12\or 13\or 14\or 15\or 16\or 17\or 18\or 19\or 20\or 21\or 22\or 23\or 24\or 25\or 26\or 27\or 28\or 29\or 30\or 31\fi} 
\newcolumntype{d}[2]{D{.}{.}{#1.#2}} 
\newcommand*{\abstractnoindent}{} 
\let\abstractnoindent\abstract
\renewcommand*{\abstract}{\let\quotation\quote\let\endquotation\endquote
\abstractnoindent}
\renewcommand{\p@enumii}[1]{\theenumi(#1)}
\theoremstyle{break} 
\newtheorem{definition}{Definition}[section] 
\newtheorem{lemma}[definition]{Lemma}
\newtheorem{remark}[definition]{Remark}
\newtheorem{example}[definition]{Example}
\newtheorem{algorithm}{Algorithm}
\theoremstyle{nonumberbreak} 
\newtheorem{proof}{Proof}
\newcommand*{\IP}{\mathbb{P}}
\newcommand*{\IE}{\mathbb{E}}
\newcommand*{\IR}{\mathbb{R}}
\newcommand*{\IN}{\mathbb{N}}
\tikzstyle arrowstyle=[scale=2]
\tikzstyle directed=[postaction={decorate,decoration={markings,
    mark=at position 1 with {\arrow[arrowstyle]{stealth}}}}]
\begin{document}

\renewcommand{\figurename}{Fig.}

\thispagestyle{plain}
	\begin{center}
		{\bfseries\Large Exact simulation of reciprocal Archimedean copulas}
		\par\bigskip
		\vspace{1cm}
		
		{\Large Jan-Frederik Mai}\\
		\vspace{0.2cm}
		{XAIA Investment}\\
		{Sonnenstr. 19, 80331 M\"unchen}\\
		{email: jan-frederik.mai@xaia.com,}\\
		{phone: +49 89 589275-131.}\\
	\end{center}

The decreasing enumeration of the points of a Poisson random measure whose mean measure is Radon on $(0,\infty]$ can be represented as a non-increasing function of the jump times of a standard Poisson process. This observation allows to generalize the essential idea from a well-known exact simulation algorithm for arbitrary extreme-value copulas to copulas of a more general family of max-infinitely divisible distributions, with reciprocal Archimedean copulas being a particular example. 

\section{Introduction}
A copula $C:[0,1]^d \rightarrow [0,1]$ is a multivariate distribution function of a random vector whose components are all uniformly distributed on $[0,1]$, cf.\ \cite{nelsen06} for background.
The family of reciprocal Archimedean copulas has been introduced and analyzed in \cite{genest18}. With a parameterizing univariate distribution function $F:[0,\infty) \rightarrow [0,1]$, a copula in this class has the analytical form
\begin{gather*}
C_F(u_1,\ldots,u_d) = \prod_{A \in \mathcal{P}_{d,o}}F\Big( \sum_{k \in A}F^{-1}(u_k)\Big)\Big/ \prod_{A \in \mathcal{P}_{d,e}}F\Big( \sum_{k \in A}F^{-1}(u_k)\Big),
\end{gather*}
where $\mathcal{P}_{d,o}$ (resp.\ $\mathcal{P}_{d,e}$) is the set of all non-empty subsets of $\{1,\ldots,d\}$ with odd (resp.\ even) cardinality. The nomenclature of this copula family is justified by some striking analogies with the well-understood family of Archimedean copulas, cf.\ \cite{neilnev09} for background on the latter. For instance, \cite{genest18} show that $C_F$ is a proper copula in dimension $d$ if and only if the parameterizing function $F:[0,\infty) \rightarrow [0,\infty)$ has the form 
\begin{gather*}
F(t) = \exp\Big( -\int_{t}^{\infty}\big( 1-\frac{t}{x}\big)^{d-1}\,\nu(\mathrm{d}x)\Big)=:\exp\big(-\Lambda(t)\big),
\end{gather*}
with $\nu$ a non-finite Radon measure on $(0,\infty]$ such that $\nu(\{\infty\})=0$, called the radial measure. In this case, a random vector $\mathbf{Y}=(Y_1,\ldots,Y_d)$ with distribution function $C_F\big(F(t_1),\ldots,F(t_d)\big)$ has stochastic representation
\begin{gather}
\mathbf{Y} = \Bigg(\max_{k\geq 1}\Big\{R_k\,Q_1^{(k)}\Big\},\ldots,\max_{k\geq 1}\Big\{R_k\,Q_d^{(k)}\Big\}\Bigg),
\label{stoch_repr}
\end{gather}
where $\mathbf{Q}^{(k)}=(Q_1^{(k)},\ldots,Q_d^{(k)})$, $k \in \IN$, are independent and uniformly distributed on the $d$-dimensional simplex $S_d:=\{\mathbf{q}=(q_1,\ldots,q_d)\geq 0\,:\,q_1+\ldots+q_d=1\}$, and, independently, $\{R_k\}_{k \geq 1}$ is an enumeration of the points of a Poisson random measure on $(0,\infty]$ with mean measure $\nu$. It is important to notice that the index $k$ in this maximum runs through an enumeration of the points $\{R_k\}_{k\geq 1}$ of the Poisson random measure. This collection of points is almost surely countably infinite since the measure $\nu$ is non-finite\footnote{As an educational side remark, \cite{genest18} add in each component of (\ref{stoch_repr}) a zero in the set of which the maximum is taken over. In the context of general max-infinitely divisible distributions, cf.\ Remark \ref{rmk} below, this is usual and necessary, since for finite mean measure one might otherwise encounter an empty set. However, in the present situation of non-finite $\nu$ this is unnecessary.}. The most prominent member of the family of reciprocal Archimedean copulas is the Galambos copula, the terminology dating back to \cite{galambos75}, which arises for the choice $\nu(\mathrm{d}x)=\Gamma(d+1/\theta)/(\Gamma(d)\,\Gamma(1/\theta))\,x^{-1/\theta-1}\,\mathrm{d}x$, cf.\ \cite[Example 5]{genest18}. Further background on the Galambos copula can be found in \cite{mai14}. 
\par
The stochastic representation (\ref{stoch_repr}) is difficult to simulate from due to the infinite maximum, which is why \cite{genest18} only propose an approximative simulation strategy. For extreme-value copulas, a family whose intersection with reciprocal Archimedean copulas equals the Galambos copula, an alternative and exact simulation strategy is developed in \cite{dombry16}, based on an idea originally due to \cite{schlather02}. Section \ref{sec_simu} shows how the idea of this algorithm can be generalized to include arbitrary reciprocal Archimedean copulas. Section \ref{sec_conc} concludes.

\section{Exact simulation of reciprocal Archimedean copulas}\label{sec_simu}
Let $\nu$ be a Radon measure on $(0,\infty]$, i.e.\ $S(t):=\nu((t,\infty])<\infty$ for all $t>0$, with the property that $\nu(\{\infty\})=0$. We denote $u_{\nu}:=\nu\big( (0,\infty]\big)$ and define a pseudo-inverse via\footnote{Notice that both $S$ and $S^{-1}$ are non-increasing and right-continuous by definition.} 
\begin{gather*}
S^{-1}(t):=\inf\{x>0\,:\,S(x) \leq t\},\quad t \in [0,u_{\nu} ].
\end{gather*}
For later reference we remark that right-continuity of $S$ implies
\begin{gather}
y < S(x) \,\Leftrightarrow \,S^{-1}(y) > x,\quad x \in (0,\infty),\,y \in [0,u_{\nu} ].
\label{geninv}
\end{gather}
We explicitly allow $\nu$ to be finite here, in which case $S^{-1}(t)$ is only defined for $t \leq u_{\nu}<\infty$, but for the application to reciprocal Archimedean copulas only the case when $u_{\nu}=\infty$ is relevant. For $x>0$ we denote by $\delta_x$ the Dirac measure at $x$. We denote by $P=\sum_{k = 1}^{N}\delta_{R_k}$ a Poisson random measure on $(0,\infty]$ with mean measure $\nu$, the random variable $N \in \IN_0 \cup \{\infty\}$ representing the number of points $\{R_k\}$ of $P$. \cite{resnick87} is an excellent textbook for background on Poisson random measures. The most important, and characterizing, property of a Poisson random measure on a measurable space $E$ with mean measure $\nu$ is the Laplace functional formula
\begin{gather}
\IE\Big[ e^{-\int_{E}f(x)\,P(\mathrm{d}x)}\Big] = \exp\Big(-\int_E \big( 1-e^{-f(x)}\big)\,\nu(\mathrm{d}x) \Big),
\label{laplace}
\end{gather}
where $\nu$ is a Radon measure on $E$, and $f$ is a non-negative, Borel-measurable function on $E$. Recall that if $\nu$ is non-finite, $P$ has countably many points $R_1,R_2,\ldots$. But if $\nu$ is finite, the number of points $N$ has a Poisson distribution with parameter $u_{\nu}$. Hence, regarding notation it is convenient for us to treat both cases jointly by denoting the points of $P$ by $R_1,\ldots,R_N$, possibly allowing for the value $N=\infty$ in case of non-finite $\nu$. Without loss of generality, we further enumerate the points $R_k$ such that $R_1\geq R_2\geq \ldots$ almost surely.
\par
 The following auxiliary result follows from (\ref{laplace}) by a change of variables from (the possibly complicated measure) $\nu(\mathrm{d}x)$ to the Lebesgue measure $\mathrm{d}x$, resulting in a stochastic representation of $\{R_k\}$ that is convenient for our purpose of simulating reciprocal Archimedean copulas. Even though this computation is presumably standard in the literature on Poisson random measures, we state it as a separate lemma and provide a proof here, because it is educational, one key ingredient for the derived simulation algorithm, and apparently lesser known in the literature on copulas and dependence modeling.

\begin{lemma}[Stochastic representation of $\{R_k\}$]\label{lemma}
Let $\{\epsilon_k\}_{k \in \IN}$ be a sequence of independent and identically distributed exponential random variables with unit mean. Introducing the random variable 
\begin{gather*}
N_{\epsilon}:=\begin{cases}
\infty & \mbox{if }u_{\nu} = \infty,\\
\sum_{k \geq 1}1_{\{\epsilon_1+\ldots+\epsilon_k \leq u_{\nu}\}} & \mbox{else,}\\
\end{cases},
\end{gather*}
we have the distributional equality
\begin{gather*}
\{R_k\}_{k = 1,\ldots,N} \stackrel{d}{=}\{S^{-1}(\epsilon_1+\ldots+\epsilon_k)\}_{k = 1,\ldots,N_{\epsilon}}.
\end{gather*}
\end{lemma}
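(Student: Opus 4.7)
The plan is to verify the claimed distributional equality by comparing Laplace functionals: since the law of a simple point process is determined by its Laplace functional (formula~(\ref{laplace})), it suffices to show that the right-hand side, viewed as the random measure $Q:=\sum_{k=1}^{N_{\epsilon}}\delta_{S^{-1}(\epsilon_1+\ldots+\epsilon_k)}$, has the same Laplace functional as the Poisson random measure $P$ on $(0,\infty]$ with mean measure $\nu$. The monotonicity of $S^{-1}$ then automatically yields the decreasing enumeration $R_1\geq R_2\geq\ldots$.

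First, I would recall the elementary fact that the partial sums $\epsilon_1,\epsilon_1+\epsilon_2,\ldots$ arrange, up to their first exceedance of $u_{\nu}$, into the points of a homogeneous Poisson process on $[0,u_{\nu}]$, i.e.\ a Poisson random measure on $[0,u_{\nu}]$ with Lebesgue mean measure. This is the standard ``inversion of a Poisson process via waiting times'' result and can itself be checked via~(\ref{laplace}) applied to Lebesgue measure. Consequently, for any non-negative measurable $g$ on $[0,u_{\nu}]$,
\begin{gather*}
\IE\Big[\exp\Big(-\sum_{k=1}^{N_{\epsilon}}g(\epsilon_1+\ldots+\epsilon_k)\Big)\Big]=\exp\Big(-\int_0^{u_{\nu}}\big(1-e^{-g(t)}\big)\,\mathrm{d}t\Big).
\end{gather*}
Plugging in $g(t):=f(S^{-1}(t))$ for an arbitrary non-negative Borel function $f$ on $(0,\infty]$, I obtain
\begin{gather*}
\IE\Big[ e^{-\int f\,\mathrm{d}Q}\Big]=\exp\Big(-\int_0^{u_{\nu}}\big(1-e^{-f(S^{-1}(t))}\big)\,\mathrm{d}t\Big).
\end{gather*}

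The crux of the proof is then the measure-theoretic change of variables
\begin{gather*}
\int_0^{u_{\nu}} h\big(S^{-1}(t)\big)\,\mathrm{d}t = \int_{(0,\infty]} h(x)\,\nu(\mathrm{d}x),\qquad h\geq 0\text{ Borel.}
\end{gather*}
This is the main step, and I expect the subtlety here: $S$ may have jumps and flat pieces, so one cannot naively substitute $x=S^{-1}(t)$. The clean way is to check the identity for indicator functions $h=\mathbf{1}_{(a,\infty]}$ using the equivalence~(\ref{geninv}): indeed $S^{-1}(t)>a \Leftrightarrow t<S(a)$ yields $\int_0^{u_{\nu}}\mathbf{1}_{(a,\infty]}(S^{-1}(t))\,\mathrm{d}t=S(a)=\nu((a,\infty])$. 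Both sides define finite measures on bounded Borel subsets of $(0,\infty]$ (using $\nu(\{\infty\})=0$ on the right and the fact that $S^{-1}$ only takes the value $\infty$ on a Lebesgue-null set on the left), agree on the $\pi$-system of intervals $(a,\infty]$, and therefore coincide on all Borel sets by the uniqueness part of Carath\'eodory's extension theorem. Linearity and monotone convergence extend the identity to all non-negative Borel $h$.

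Applying this with $h(x):=1-e^{-f(x)}$ gives
\begin{gather*}
\IE\Big[ e^{-\int f\,\mathrm{d}Q}\Big]=\exp\Big(-\int_{(0,\infty]}\big(1-e^{-f(x)}\big)\,\nu(\mathrm{d}x)\Big),
\end{gather*}
which by~(\ref{laplace}) coincides with the Laplace functional of $P$. Hence $Q\stackrel{d}{=}P$ as random measures on $(0,\infty]$. Finally, since $t\mapsto S^{-1}(t)$ is non-increasing and $\epsilon_1+\ldots+\epsilon_k$ is strictly increasing in $k$, the enumeration $S^{-1}(\epsilon_1+\ldots+\epsilon_k)$, $k=1,\ldots,N_{\epsilon}$, is already in non-increasing order, matching the ordering convention imposed on $R_1\geq R_2\geq\ldots$, and the claimed distributional equality of ordered sequences follows.
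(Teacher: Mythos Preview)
Your proof is correct and follows essentially the same route as the paper: both arguments compute the Laplace functional of the pushforward point process, invoke that the exponential partial sums form a standard Poisson process, verify the change-of-variables identity $\int_0^{u_\nu} h(S^{-1}(t))\,\mathrm{d}t=\int h\,\mathrm{d}\nu$ on intervals $(a,\infty]$ via~(\ref{geninv}), and conclude by uniqueness of the Laplace functional. Your explicit remark on the ordering at the end is a small but welcome addition that the paper leaves implicit.
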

\begin{proof}
Define the point measure $\hat{P}:=\sum_{k =1}^{N_{\epsilon}}\delta_{S^{-1}(\epsilon_1+\ldots+\epsilon_k)}$. We notice that $\tilde{P}=\sum_{k \geq 1}\delta_{\epsilon_1+\ldots+\epsilon_k}$ equals a Poisson random measure on $[0,\infty)$ with mean measure the Lebesgue measure $\mathrm{d}x$, hence
\begin{align*}
&\IE\Big[ e^{-\int_{(0,\infty]}f(x)\,\hat{P}(\mathrm{d}x)}\Big] = \IE\Big[ e^{-\sum_{k = 1}^{N_{\epsilon}}f\big( S^{-1}(\epsilon_1+\ldots+\epsilon_k)\big)}\Big]  \\
&  \quad = \IE\Big[ e^{-\int_{[0,\infty)}f\big(S^{-1}(x)\big)\,1_{\{x \leq u_{\nu}\}}\,\tilde{P}(\mathrm{d}x)}\Big]\\
& \quad  \stackrel{(\ref{laplace})}{=}\exp\Big( -\int_{[0,\infty)} \big( 1-e^{-f\big(S^{-1}( x)\big)\,1_{\{x \leq u_{\nu}\}}}\big)\,\mathrm{d}x\Big)  \\
& \quad =\exp\Big( -\int_{0}^{u_{\nu}} \big( 1-e^{-f\big(S^{-1}( x)\big)}\big)\,\mathrm{d}x\Big) \stackrel{(\ast)}{=} \exp\Big( -\int_{0}^{\infty} \big( 1-e^{-f(x)}\big)\,\nu(\mathrm{d}x)\Big)\\
& \quad \stackrel{(\ref{laplace})}{=} \IE\Big[ e^{-\int_{(0,\infty]}f(x)\,{P}(\mathrm{d}x)}\Big].
\end{align*}
To verify equation $(\ast)$, denote by $\lambda$ the Lebesgue measure on $(0,\infty)$, and observe that the map $G:=S^{-1}:\big(0,u_{\nu})\rightarrow (0,\infty)$ is measurable. Consider the measure $G_{\lambda}$ defined by $G_{\lambda}(E):=\lambda(G^{-1}(E))$, $E$ a Borel set in $(0,\infty)$ and $G^{-1}(E)$ its pre-image under $G$ in $(0,u_{\nu})$. Then we observe for $x \in (0,\infty)$ that
\begin{align*}
G_{\lambda}\big( (x,\infty]\big) &= \lambda\big( \{y \in (0,u_{\nu})\,:\, S^{-1}(y) > x\}\big)\stackrel{(\ref{geninv})}{=}\lambda\big( \{y \in (0,u_{\nu})\,:\,y < S(x) \}\big)\\
& = S(x) = \nu((x,\infty]).
\end{align*}
Consequently, $G_{\lambda}=\nu$ and we have the measure-theoretic change of variable formula
\begin{align*}
\int_{(0,\infty)}g(x)\,\nu(\mathrm{d}x)&=\int_{(0,\infty)}g(x)\,G_{\lambda}(\mathrm{d}x) = \int_{(0,u_{\nu})}g\big(G(x) \big)\,\lambda(\mathrm{d}x)\\
& =\int_0^{u_{\nu}}g\big(S^{-1}(x) \big)\,\mathrm{d}x.
\end{align*}
Applying it to the function $g(x)=1-e^{-f(x)}$ implies $(\ast)$. The claim now follows from uniqueness of the Laplace functional of Poisson random measure, since $f$ was an arbitrary non-negative, Borel-measurable function.
\end{proof}

\begin{remark}[Simulation of infinitely divisible laws on $[0,\infty)$]
It is not the first time that the change of variables technique of Lemma \ref{lemma} is found useful for an application to simulation. To provide another example, \cite{bondesson82} uses essentially the same technique to represent a non-negative\footnote{\cite{bondesson82} in his article also considers infinitely divisible random variables on $\IR$.} infinitely divisible random variable $X$ with associated L\'evy measure\footnote{In comparison with the measure $\nu$ of the present article, a L\'evy measure satisfies the additional integrability condition $\int_0^{1}x\,\nu(\mathrm{d}x)<1$.} $\nu$ as
\begin{gather*}
X = \sum_{k =1}^{N}R_k \stackrel{d}{=} \sum_{k=1}^{N_{\epsilon}}S^{-1}(\epsilon_1+\ldots+\epsilon_k)
\end{gather*} 
and discusses the possibility to simulate $X$ based on this stochastic representation.
\end{remark}

If the radial measure $\nu$ is absolutely continuous with positive density on $(0,\infty)$, the function $S$ is continuous and strictly decreasing and $S^{-1}$ is the regular inverse. The following example sheds some light on the situation in the case of discrete measures $\nu$.

\begin{example}[Discrete radial measures]\label{example_discrete}
Let $\infty>a_1>a_2>\ldots>0$ with $\lim_{k \rightarrow \infty}a_k=0$ and $b_k \geq 0$ with $\sum_{k \geq 1}b_k = \infty$ (to guarantee that $\nu$ is non-finite), and consider $\nu = \sum_{k \geq 1}b_k\,\delta_{a_k}$ on $(0,\infty)$, obviously Radon on $(0,\infty]$. The functions $S$ and $S^{-1}$ in this case are given by
\begin{align*}
S(t) &= \sum_{k\geq 1}\Big( \sum_{i=1}^{k}b_i\Big)\,1_{[a_{k+1},a_k)}(t),\quad t \geq 0,\\
S^{-1}(t)&=\sum_{k\geq 1}a_{k}\,1_{\big[\sum_{i=1}^{k-1}b_i,\sum_{i=1}^{k}b_i \big)}(t),\quad t \geq 0.
\end{align*}
As a concrete one-parametric example, let $\theta>0$ and $a_k=1/k$, $b_k=\theta$, $k \geq 1$. It is not difficult to observe that in this case the formulas above boil down to
\begin{gather*}
S(t) = \lim_{u \downarrow t}\theta\,\Big\lfloor \frac{1}{u} \Big\rfloor,\quad S^{-1}(t)=1/{\Big\lceil \frac{t}{\theta} \Big\rceil},
\end{gather*}
with $\lfloor . \rfloor$ and $\lceil . \rceil$ denoting the usual floor and ceiling functions mapping $[0,\infty)$ to $\IN_0$. The associated distribution function $F=\exp(-\Lambda)$ generating the $d$-dimensional reciprocal Archimedean copula associated with $\nu$ is determined by
\begin{gather}
\Lambda(t) = \theta\,\sum_{k =1}^{\lfloor 1/t \rfloor}\big( 1-k\,t\big)^{d-1} \stackrel{(d=2)}{=} \theta\,\Big\lfloor \frac{1}{t} \Big\rfloor\,\Big( 1-t\,\Big\{ \Big\lfloor \frac{1}{t} \Big\rfloor+1\Big\}/2\Big),
\label{example_generator}
\end{gather}
where the last equation in the bivariate case is stated explicitly for later reference.
\end{example}

Next, we turn to the simulation of reciprocal Archimedean copulas and introduce the notation
\begin{align*}
\mathbf{Y}_n&:=\Big(\max_{k = 1,\ldots,n}\big\{R_k\,Q^{(k)}_1\big\},\ldots,\max_{k = 1,\ldots,n}\big\{R_k\,Q^{(k)}_d\big\}\Big),\\
 M_n&:=\mbox{minimal component of }\mathbf{Y}_n,\quad n \geq 1.
\end{align*}
It is observed that every single component of $R_{n+1}\,\mathbf{Q}^{(n+1)}$ is smaller or equal than $R_n$, since the sequence $\{R_n\}_{n \geq 1}$ is non-increasing by our enumeration. This implies 
\begin{gather*}
\mathbf{Y} = \mathbf{Y}_{\infty} = \mathbf{Y}_M,\mbox{ with } M:=\min\{n \geq 1\,:\,R_{n+1} \leq M_n\}. 
\end{gather*} 
Since $R_n$ is almost surely decreasing to zero and $M_n$ is almost surely non-decreasing, $M$ is almost surely finite. Consequently, in order to simulate $\mathbf{Y}$, it is sufficient to simulate iteratively $\mathbf{Y}_n$ for $n=1,2,\ldots,$ until the stopping criterion $R_{n+1} \leq M_n$ takes place, i.e.\ until $n=M$. This is precisely the simulation idea of Algorithm 1 in \cite{dombry16} for extreme-value copulas, see also \cite{schlather02}, enhanced to fit the scope of reciprocal Archimedean copulas as well with the help of Lemma \ref{lemma}. Algorithm \ref{algo} summarizes this strategy in pseudo code. It requires evaluation of $S^{-1}$ and of $F$, which are the sole numerical obstacles. Notice further that we propose to simulate the random vectors $\mathbf{Q}^{(k)}$ according to the well-known stochastic representation
\begin{gather}
\mathbf{Q}^{(k)} \stackrel{d}{=} \Big( \frac{E_1}{E_1+\ldots+E_d},\ldots,\frac{E_d}{E_1+\ldots+E_d}\Big),
\label{uniformsimplex}
\end{gather}
where $E_1,\ldots,E_d$ are independent exponential random variables with unit mean, cf.\ \cite[Theorem 5.2(2), p.\ 115]{fang90}.

\begin{algorithm}[Exact simulation of reciprocal Archimedean copulas]\label{algo}
Consider a $d$-dimensional family of reciprocal Archimedean copulas with generator $F$, associated with radial measure $\nu$. We denote by $S(t):=\nu\big((t,\infty]\big)$ the survival function of its radial measure, respectively its pseudo inverse by $S^{-1}$.
\begin{itemize}
\item[(0)] Initialize $(Y_1,\ldots,Y_d):=(0,\ldots,0)$.
\item[(1)] Draw $\epsilon$ unit exponential, and set $T:=\epsilon$ and $R:=S^{-1}(T)$.
\item[(2)] While $R>\min_{i=1,\ldots,d}\{Y_i\}$ perform the following steps:
\begin{itemize}
\item[(2.1)] Draw a list $(E_1,\ldots,E_d)$ of iid unit exponential random variables.
\item[(2.2)] For each $i=1,\ldots,d$ set
\begin{gather*}
Y_i:=\max\Big\{Y_i,R\, \frac{E_i}{E_1+\ldots+E_d}\Big\}.
\end{gather*}
\item[(2.3)] Draw $\epsilon$ unit exponential, and set $T:=T+\epsilon$ and $R:=S^{-1}(T)$.
\end{itemize} 
\item[(3)] Return $(U_1,\ldots,U_d)$, where $U_i:=F(Y_i)$, $i=1,\ldots,d$.
\end{itemize} 
\end{algorithm}

\begin{remark}[Generalization to more general distributions]\label{rmk}
If the random vectors $\mathbf{Q}^{(k)}$ in (\ref{stoch_repr}) are not uniformly distributed on $S_d$, but instead follow some other distribution on $S_d$, Algorithm \ref{algo} can still be used for simulation, provided one has at hand a simulation algorithm for $\mathbf{Q}^{(k)}$. In this case, one breaks out of the cosmos of reciprocal Archimedean copulas. In the particular case $\nu(\mathrm{d}x)=d/x^2\,\mathrm{d}x$ this generalized algorithm equals precisely \cite[Algorithm 1]{dombry16} for arbitrary extreme-value copulas. However, for other radial measures $\nu$ one also breaks out of the cosmos of extreme-value copulas. The resulting distribution function of $\mathbf{Y}$ in the general case is
\begin{gather*}
\IP(\mathbf{Y} \leq \mathbf{y}) = \exp\Big(-\IE\Big[S\Big( \min\Big\{ \frac{y_1}{Q_1^{(1)}},\ldots ,\frac{y_d}{Q_d^{(1)}}\Big\}\Big) \Big]\Big),
\end{gather*}
for $\mathbf{y}=(y_1,\ldots,y_d) \in [0,\infty)^d$. For $x>0$ the function $t \mapsto x\,S(t)=x\,\nu((t,\infty])$ is the survival function of the Radon measure $x\,\nu$, showing that the $d$-variate function $\IP(\mathbf{Y} \leq \mathbf{y})^x$ is again a distribution function. Multivariate distribution functions with this property are called max-infinitely divisible, see \cite{resnick87} for background. 
\end{remark}

Concerning the implementation of Algorithm \ref{algo}, the biggest numerical difficulty is the evaluation of the inverse $S^{-1}$. One might be lucky to have a closed form of $S^{-1}$ available. For example, in case of the Galambos copula the function $S^{-1}$ is given by $S^{-1}(t)=c_{\theta}\,t^{-\theta}$ with constant $c_{\theta}:=(\Gamma(d)\,\Gamma(1/\theta)/\Gamma(d+1/\theta)/\theta)^{-\theta}$, cf.\ \cite[Example 5]{genest18}. Interestingly, Algorithm \ref{algo} in this particular case is different than the one derived in \cite{dombry16}, which is designed for extreme-value copulas rather than reciprocal Archimedean copulas, but which also includes the Galambos copula. The algorithm of \cite{dombry16} is always based on the decreasing sequence $d/(\epsilon_1+\ldots+\epsilon_k)$, but ours on $R_k=S^{-1}(\epsilon_1+\ldots+\epsilon_k)$ instead. For $\theta=1$, these two sequences agree, so the simulation algorithms coincide\footnote{Except for a different simulation strategy of the uniform law on the simplex $S_d$.}. For $\theta \neq 1$, however, they are truly different, since Algorithm \ref{algo} always simulates the uniform law on the simplex and varies the sequence $R_k$, while \cite{dombry16} stick with the sequence $d/(\epsilon_1+\ldots+\epsilon_k)$ and instead vary the measure on the simplex.
\par
Example \ref{example_discrete} shows that discrete measures $\nu$ give rise to $S^{-1}$ having (piecewise constant) closed form. As an example, scatter plots for the bivariate reciprocal Archimedean copula associated with the generator $F(t)=\exp(-\Lambda(t))$, for $\Lambda$ in (\ref{example_generator}), are depicted in Figure \ref{fig:ex}. From our simulations, it appears as though the resulting family converges to the independence copula for $\theta \rightarrow \infty$, and to some limiting copula (but not the upper Fr\'echet bound) for $\theta \rightarrow 0$. Furthermore, since $S^{-1}(t)=1/{\lceil t/\theta \rceil}$ maps to the discrete set $\{1/k\,:\,k \in \IN\}$, the copula assigns positive mass to the one-dimensional subsets
\begin{gather*}
A_k:=\Big\{\Big(F\Big(\frac{u}{k} \Big),F\Big(\frac{1-u}{k}\Big)\Big)\,:\,u \in (0,1)\Big\} \subset [0,1]^2,\quad k \geq 1,
\end{gather*}
of the unit square. By construction, this mass is decreasing in $k$, and the first sets $A_1,A_2,\ldots$ are clearly visible in the scatter plots.

\begin{figure}[h]
\centering
\includegraphics[width=0.4\linewidth]{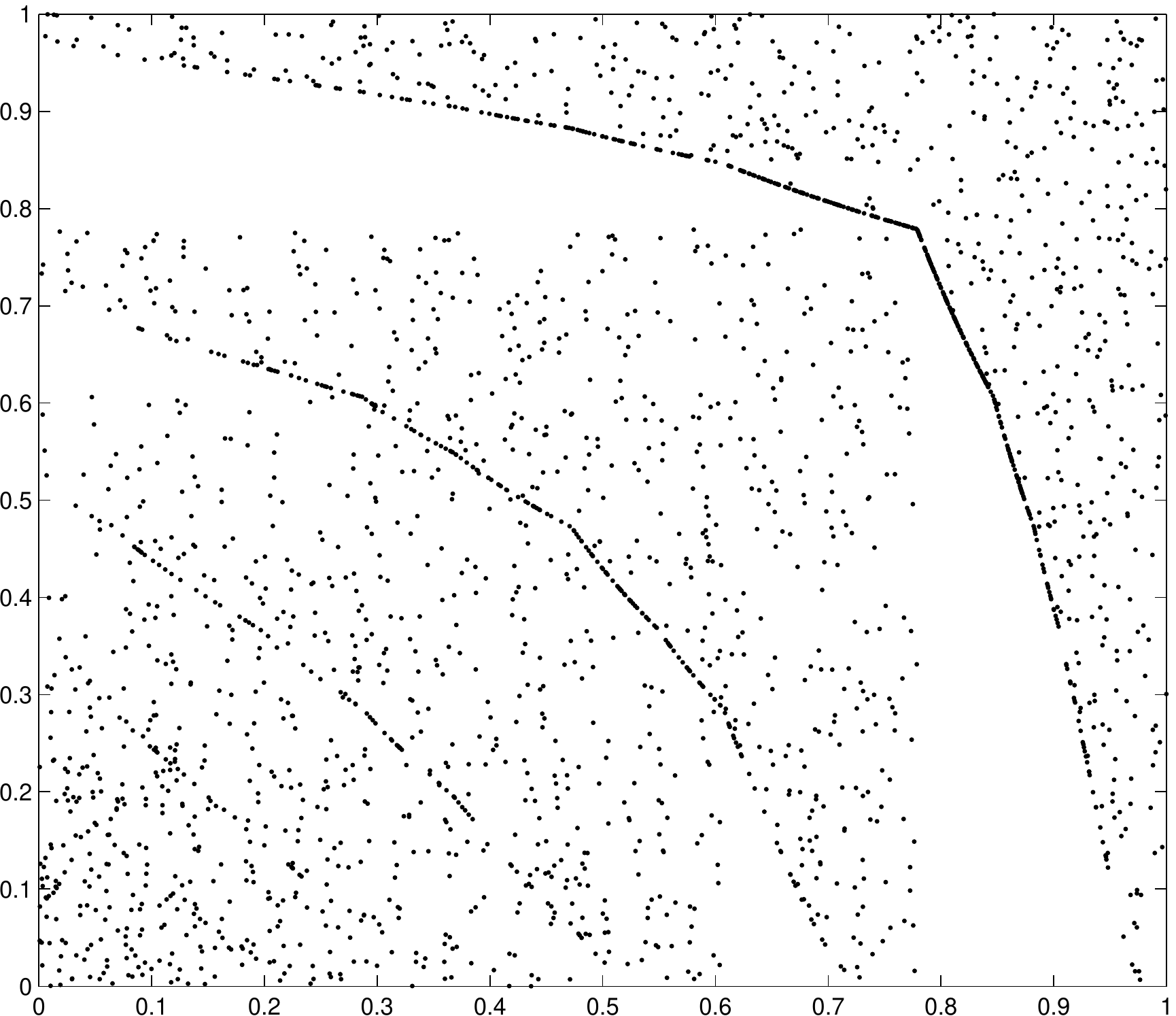}
\hfill
\includegraphics[width=0.4\linewidth]{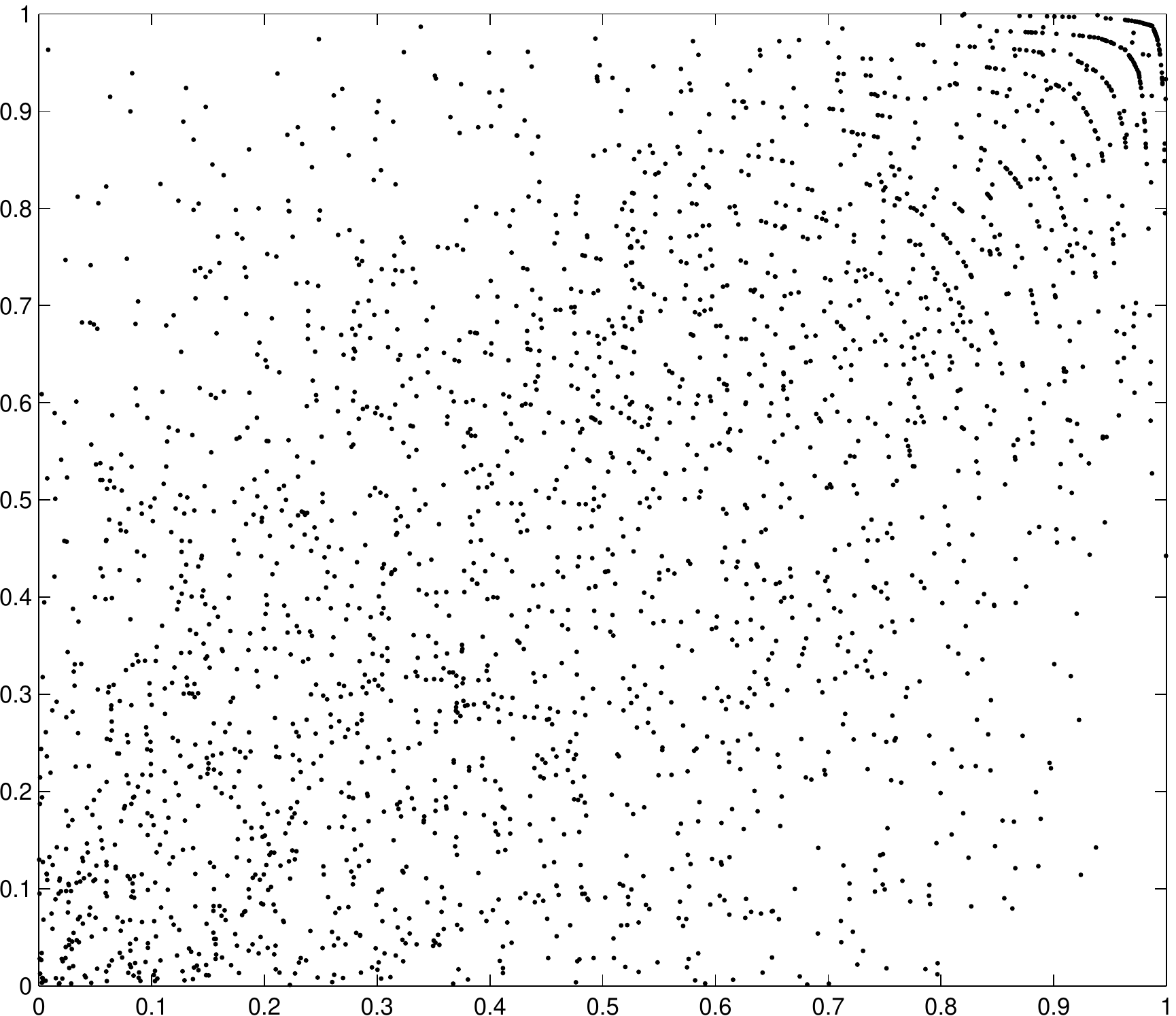}
\caption{Scatter plots for the bivariate reciprocal Archimedan copula with generator $F(t)=\exp(-\Lambda(t))$, for $\Lambda$ in (\ref{example_generator}). Left: $\theta=0.5$. Right: $\theta=0.025$.}
\label{fig:ex}
\end{figure}

In case $S^{-1}$ is not given in closed form, it is  convenient to recall that $S$ is given in terms of $\Lambda$ by the Williamson transform inversion formula 
\begin{align}
S(t) = \sum_{k=0}^{d-2}\frac{(-1)^{k}\,\Lambda^{(k)}(t)}{k!}t^k+\frac{(-1)^{d-1}\,\Lambda^{(d-1)}_{+}(t)}{(d-1)!}(d-1)^k,
\label{SandLambda}
\end{align}
with $\Lambda^{(d-1)}_+$ denoting the right-hand derivative of $\Lambda^{(d-2)}$, cf.\ \cite{genest18}. In particular, $\Lambda$ is $d$-monotone, which implies that the first $d-2$ derivatives exist, and $\Lambda^{(d-2)}$ is convex (so that $\Lambda^{(d-1)}_+$ exists). If $S$ has discontinuities, such as in the case of a discrete measure $\nu$ like in Example \ref{example_discrete}, $\Lambda$ is not $(d+1)$-monotone, i.e.\ $\Lambda^{(d-1)}$ does not exist. Besides traditional Newton-Raphson inversion, here are two more ideas for dealing with $S^{-1}$ in Algorithm \ref{algo}: 
\begin{itemize}
\item[(i)] One idea to evaluate $S^{-1}$ approximatively is to approximate $S$ by a piecewise constant function, which amounts to approximation of $\nu$ by a discrete measure. Example \ref{example_discrete} then gives the closed form of the resulting piecewise constant approximation to $S^{-1}$. This implies an approximative simulation algorithm for the reciprocal Archimedean copula in concern. In light of the relation between $S$ and $\Lambda$ in (\ref{SandLambda}), this means that the exponent $\Lambda$ in concern is approximated by an exponent that is proper $d$-monotone, i.e.\ not $(d+1)$-monotone. Similarly, other invertible approximations might be feasible as well, for example a piecewise linear approximation of $S$. Notice that a piecewise constant approximation of a continuous survival function $S$ results in an approximating reciprocal Archimedean copula with singular component, while a piecewise linear approximation would not have this drawback.
\item[(ii)] We assume that $S$ is differentiable. In step $n+1$ of the while-loop in Algorithm \ref{algo}, the values $x_n:=\epsilon_1+\ldots+\epsilon_n$ and $S^{-1}(x_n)$ have already been computed (in the previous step), and one seeks to simulate the random variable $S^{-1}(x_n+\epsilon_{n+1})$. In the initial step we have $x_0=0$ and $S^{-1}(0)=\infty$. The random variable $S^{-1}(x_n+\epsilon_{n+1})$ has density
\begin{gather*}
f(x):=e^{x_n}\,e^{-S(x)}\,S^{'}(x),\quad x \in \big(0,S^{-1}(x_n)\big),\quad n \geq 0,
\end{gather*}
which is often given in closed form, e.g.\ by virtue of formula (\ref{SandLambda}), so can be evaluated efficiently. If one can find a random variable $X$, which one can simulate from and whose density $g$ satisfies $f \leq c\,g$ for some $c \geq 1$, traditional rejection acceptance sampling can be applied, cf.\ \cite[p.\ 235ff]{maischerer17}, which results in an exact simulation algorithm. 
\end{itemize}

\begin{remark}[Expected runtime in dependence on the dimension]
Using the stochastic representation (\ref{uniformsimplex}) for the uniform distribution on the simplex, the simulation of the random vector $\mathbf{Q}^{(k)}$ in the $k$-th while-loop requires to simulate $d$ independent exponential random variables, hence has complexity order linear in $d$. However, the number of required while-loops is random itself. In the special case $\nu(\mathrm{d}x)=d\,x^{-2}\,\mathrm{d}x$, which corresponds to the Galambos copula with parameter $\theta=1$ and $S^{-1}(x)=d/x$, Algorithm \ref{algo} coincides with \cite[Algorithm 1]{dombry16}. \cite[Proposition 4]{dombry16}, which in turn refers to \cite{oesting18}, shows that the expected number of required while-loops in this case equals $\IE[M]=d\,\IE[\max\{X_1,\ldots,X_d\}]$, where $(X_1,\ldots,X_d)$ is a random vector with survival copula the Galambos copula in concern and all one-dimensional margins unit exponentially distributed. It follows from a computation in \cite{mai14} that
\begin{gather*}
1 \leq \IE[\max\{X_1,\ldots,X_d\}] = \sum_{i=1}^{d}\binom{d}{i}\,\frac{(-1)^{i+1}}{H_i} \leq d,
\end{gather*}
where $H_n:=1+1/2+\ldots+1/n$, $n=1,2,\ldots$, denotes the harmonic series. Hence, in this special case the complexity order of the algorithm is known explicitly as a function of the dimension $d$ and lies somewhere between $d^2$ and $d^3$. Unfortunately, the proof of this result relies heavily on the fact that $d/R_k=\epsilon_1+\ldots+\epsilon_k$. In the case of general radial measure, we have
\begin{align*}
\IE[M] & = 1+\sum_{m=1}^{\infty}\IP(M>m)=1+\sum_{m=1}^{\infty}\IP(R_{k+1} > M_k,\,k=1,\ldots,m)\\
& = 1+\sum_{m=1}^{\infty}\IP\Big( \frac{S^{-1}\big( \sum_{i=1}^{m+1}\epsilon_i\big)}{S^{-1}\big(\sum_{i=1}^{k}\epsilon_i\big)}> \max\big\{Q^{(k)}_1,\ldots,Q^{(k)}_d \big\},\,k=1,\ldots,m\Big).
\end{align*} 
From this formula for $S^{-1}(x)=d/x$, \cite{oesting18} make use of the fact that $\sum_{i=1}^{k}\epsilon_i/\sum_{i=1}^{m+1}\epsilon_i$, for $k=1,\ldots,m$, correspond to order statistics of independent samples from the uniform law on $[0,1]$, which we cannot for general $\nu$ (hence $S^{-1}$). However, using this known Galambos case as benchmark, from the last formula we observe that $\IE[M]$ is larger than in the known benchmark case if the function $x \mapsto S^{-1}(x)\,x$ is increasing, and that it is smaller if $x \mapsto S^{-1}(x)\,x$ is decreasing. Although we cannot say a lot about the case when $x \mapsto S^{-1}(x)\,x$ is neither increasing nor decreasing, this at least provides a feeling for the effect of the choice of radial measure on the expected runtime of the algorithm. In the case when $\nu$ is absolutely continuous with density $f_{\nu}$ one may check whether $x \mapsto S^{-1}(x)\,x$ is increasing (resp.\ decreasing) by checking whether $f_{\nu}(x)\,x \geq S(x)$  (resp.\ $f_{\nu}(x)\,x\leq S(x)$) for all $x>0$, which is a quite intuitive condition in terms of the density. Heuristically, it says that heavier tails of the radial measure make the algorithm faster, and vice versa.
\end{remark}

\section{Conclusion}\label{sec_conc}
An exact simulation algorithm for reciprocal Archimedean copulas has been presented. It was based on the concatenation of two ideas. On the one hand, via a change of variables transformation the points of a Poisson random measure on $(0,\infty]$, whose mean measure equals the radial measure of the reciprocal Archimedean copula, have been represented as a decreasing function of the jump times of a standard Poisson process. On the other hand, an idea of \cite{dombry16} has been enhanced from a simulation algorithm for extreme-value copulas to copulas of more general max-infinitely divisible distributions, of which reciprocal Archimedean copulas are a particular representative.


\begin{thebibliography}{}
\bibitem[Bondesson (1982)]{bondesson82} L.\ Bondesson, On simulation from infinitely divisible distributions, {Advances in Applied Probability} {14} (1982) pp.\ 855--869.
\bibitem[Dombry et al.\ (2016)]{dombry16} C.\ Dombry, S.\ Engelke, M.\ Oesting, Exact simulation of max-stable processes, {Biometrika} {103:2} (2016) pp.\ 303--317.
\bibitem[Fang et al.\ (1990)]{fang90} K.-T.\ Fang, S.\ Kotz, K.-W.\ Ng, Symmetric multivariate and related distributions, {Chapman and Hall, London} (1990).
\bibitem[Galambos (1975)]{galambos75} J.\ Galambos, Order statistics of samples from multivariate distributions, {Journal of the American Statistical Association} {70:351} (1975) pp.\ 674--680.
\bibitem[Genest et al.\ (2018)]{genest18} C.\ Genest, J.\ Ne\v{s}lehov\'{a}, L.-P.\ Rivest, The class of multivariate max-id copulas with $L_1$-norm symmetric exponent measures, {Bernoulli, in press} (2018).
\bibitem[Mai (2014)]{mai14} J.-F.\ Mai, A note on the Galambos copula and its associated Bernstein function, {Dependence Modeling} {2} (2014) pp.\ 22--29.
\bibitem[Mai, Scherer (2017)]{maischerer17} J.-F.\ Mai, M.\ Scherer, Simulating Copulas, 2nd edition, {World Scientific Press} (2017).
\bibitem[McNeil, Ne\v{s}lehov\'{a} (2009)]{neilnev09} A.J. McNeil, J. Ne\v{s}lehov\'{a}, Multivariate Archimedean copulas, $d$-monotone functions and $l_{1}$-norm symmetric distributions, {Annals of Statistics} {37:5B} (2009) pp. 3059--3097.
\bibitem[Nelsen (2006)]{nelsen06} R.B.\ Nelsen, An introduction to copulas, 2nd edition, {Springer} (2006).
\bibitem[Oesting et al.\ (2018)]{oesting18} M.\ Oesting, M.\ Schlather, C.\ Zhou, Exact and fast simulation of max-stable processes on a compact set using the normalized spectral representation, {Bernoulli} {24:1} (2018), pp.\ 1497--1530.
\bibitem[Resnick (1987)]{resnick87} S.\ Resnick, Extreme values, regular variation and point processes, {Springer-Verlag} (1987).
\bibitem[Schlather (2002)]{schlather02} M.\ Schlather, Models for stationary max-stable random fields, {Extremes} {5:1} (2002), pp.\ 33--44.
\end{thebibliography}
\end{document}